\newcommand{\N}{\mathbb{N}}
\newcommand{\R}{\mathbb{R}}
\newcommand{\Z}{\mathbb{Z}}
\newcommand{\C}{\mathbb{C}}
\newcommand{\free}{\Sigma^*}
\newcommand{\mat}[1]{\mathbf{#1}}
\newcommand{\A}{\mat{A}}
\renewcommand{\H}{\mat{H}}
\newcommand{\balpha}{\boldsymbol{\alpha}}
\newcommand{\bbeta}{\boldsymbol{\beta}}
\newcommand{\Hfree}{\mathcal{H}^2(\free)}
\newcommand{\wfa}{\langle \balpha , \{\A_a\},  \bbeta \rangle}
\newcommand{\wa}{\langle \balpha , \A,  \bbeta \rangle}
\newcommand{\norm}[1]{\|#1\|}
\title[Towards Approximate Minimization in the Multi-Letter Case]{Towards an AAK Theory Approach to Approximate Minimization in the Multi-Letter Case}
\author{\Name{Clara Lacroce}\nametag{\thanks{Corresponding author}} \Email{clara.lacroce@mail.mcgill.ca}\\
\addr School of Computer Science, McGill University \& Mila , Montr\'eal, Canada
\AND
\Name{Prakash Panangaden} \Email{prakash@cs.mcgill.ca}\\
  \addr School of Computer Science, McGill University \& Mila , Montr\'eal, Canada
\AND
\Name{Guillaume Rabusseau} \Email{grabus@iro.umontreal.ca}\\
    \addr DIRO, Universit\'e de Montr\'eal \& Mila, Montr\'eal, Canada}
\begin{document}

\maketitle

\begin{abstract}
We study the approximate minimization problem of weighted finite automata (WFAs): given a WFA, we want to compute its optimal approximation when restricted to a given size. We reformulate the problem as a rank-minimization task in the spectral norm, and propose a framework to apply Adamyan-Arov-Krein (AAK) theory to the approximation problem. This approach has already been successfully applied to the case of WFAs and language modelling black boxes over one-letter alphabets \citep{AAK-WFA,AAK-RNN}. Extending the result to multi-letter alphabets requires solving the following two steps. First, we need to reformulate the approximation problem in terms of noncommutative Hankel operators and noncommutative functions, in order to apply results from multivariable operator theory. Secondly, to obtain the optimal approximation we need a version of noncommutative AAK theory that is constructive. In this paper, we successfully tackle the first step, while the second challenge remains open.
\end{abstract}
\begin{keywords}
Approximate minimization, Hankel matrices, AAK theory, weighted finite automata, language modelling.
\end{keywords}

\section{Introduction}
\label{sec:intro}

    The problem of minimizing an automaton has been well studied over the past seventy years. When dealing with quantitative models, like weighted or probabilistic automata, it becomes possible to define quantitative notions of model similarity, and to find \emph{approximately} minimal approximations. 
    In particular, given a minimal weighted finite automaton (WFA), the \emph{approximate minimization problem} consists in finding a WFA, smaller than the minimal one, that mimics its behaviour. We are interested in quantifying and minimizing the approximation error. 
    The approximate minimization problem is strictly related to knowledge distillation and extraction tasks \citep{WeissWFA19,Takamasa,eyraud2020,Ayache2018,Rabusseau19,WeissDFA18,merrill2022}. When the solution of the problem is optimal, this approach has a clear advantage compared to other methods, as it allows us to search for the best WFA among those of a predefined size. This is particularly useful when dealing with limited computing resources, or to improve interpretability. 
    
    Several norms can be considered to estimate the error. The approximate minimization problem was formalized by \citet{Balle19}, and the error measured with respect to the $\ell^2$ norm. In this paper, we reformulate the problem in terms of the WFAs' Hankel matrix $\H$ and look for a low-rank approximation in the spectral norm. This norm has the advantage that it can be used to compare different classes of models. Moreover, it is possible to find (and compute) a global minimum for the error in polynomial time \citep{AAK-WFA}. 
    In fact, the celebrated Adamyan-Arov-Krein (AAK) theory provides a way, based on properties of Hankel operators and complex functions, to find the optimal approximation of $\H$ within the class of Hankel matrices. We lay out a framework for the application of this theory to the approximate minimization problem, analyzing the case of one-letter and multi-letter alphabet separately. In the first case, standard AAK theory can be applied, and the proof of the AAK theorem tells us how to construct the optimal approximation. This setting has been studied by \citet{AAK-WFA} to obtain an algorithm, based on AAK theory, returning the optimal approximation of a class of WFAs. \citet{AAK-RNN} generalized this approach to find an (asymptotically) optimal approximation of a general black-box model trained for language modelling on sequential data, still under the one-letter assumption. Extending the work to the multi-letter case requires a noncommutative (NC) version of AAK theory. Tackling this problem is fundamental for the application of these results and to experimentally compare the performance of the spectral norm against other norms (for example behavioral metrics, word error rate, or normalized discounted cumulative gain). To achieve this, the following two challenges need to be addressed.
    \begin{itemize}
        \item[1.] To apply the AAK theorem it is necessary to reformulate the approximation problem in terms of NC Hankel operators, defined in an appropriate NC space.
        \item[2.]To find the optimal approximation, we need a constructive version of the AAK theorem.
    \end{itemize}
    In this paper, we tackle the first challenge and make the following contributions. We start by summarizing the approach used by \citet{AAK-WFA, AAK-RNN} in the one-letter case. Then, we reformulate the approximate minimization problem of models over multi-letter alphabets in terms of NC Hankel operators. Finally, we suggest a way to link the Hankel matrix of a WFA to a NC rational function.
    While the second challenge remains open, this constitutes a first, encouraging step towards its solution, since the rational function is key in the construction of the optimal approximation.

\section{Background}
    
    Let $\N$, $\Z$ and $\R$ be the sets of natural, integers and real numbers, respectively. 
    We use bold letters for vectors and matrices; all vectors considered are column vectors. 
    We denote with $\mat{v}(i)$, $\mat{M}(i,:)$ and $\mat{M}(:,j)$ the $i$-th component of the vector $\mat{v}$, and the $i$-th row and $j$-th column of $\mat{M}$, respectively. 
    Given $\mat{M}\in\R^{d_1\times d_2}$, $\mat{N}\in\R^{d_1'\times d_2'}$ we denote their \emph{Kronecker product} by $\mat{M}\otimes\mat{N}\in R^{d_1d_1'\times d_2d_2'}$ with entries given by $(\mat{M}\otimes\mat{N})((i-1)d_1'+i',(j-1)d_2'+j')=\mat{M}(i,j)\mat{N}(i',j')$.
    Let $\mat{M} \in \R^{p \times q}$ of rank $n$, the compact \emph{singular value decomposition} SVD of $\mat{M}$ is the factorization $\mat{M}=\mat{U}\mat{D}\mat{V}^{\top}$, where $\mat{U}\in \R^{p\times n}$, $\mat{D}\in \R^{n\times n}$, $\mat{V}\in \R^{q \times n}$ are such that $\mat{U}^{\top}\mat{U}=\mat{V}^{\top}\mat{V}=\mat{1}$, where $\mat{1}$ denotes the identity matrix, and $\mat{D}$ is a diagonal matrix.
    The columns of $\mat{U}$ and $\mat{V}$ are called left and right \emph{singular vectors}, while the diagonal entries of $\mat{D}$ are the \emph{singular values}. 
    The \emph{spectral radius} $\rho(\mat{M})$ of $\mat{M}$ is the largest modulus among its eigenvalues. 
    A \emph{Hilbert space} is a complete normed vector space where the norm arises from an inner product. 
    Let $\ell^2$ be the space of square-summable sequences over $\N$. Let $\mathbb{T}=\{z\in \C: |z|=1\}$ and $\mathbb{D}=\{z\in \C: |z|<1\}$ be the complex unit circle and the (open) complex unit disc, respectively. 
    Let $\mathcal{L}^p(\mathbb{T})$ be the space of measurable functions on $\mathbb{T}$ for which the $p$-th power of the absolute value is Lebesgue integrable. 

\subsection{Hankel matrix and Weighted Automata}

    Let $\Sigma$ be a fixed finite alphabet, $\Sigma^*$ the set of all finite strings with symbols in $\Sigma$, and $\varepsilon$ the empty string. Given $p,s \in \Sigma^*$, we denote with $ps$ their concatenation. Let $f : \Sigma^* \to \R$, we can consider a matrix $\H_f \in \R^{\Sigma^* \times \Sigma^*}$ having rows and columns indexed by strings and defined by $\H_f(p,s) = f(ps)$ for $p, s \in \Sigma^*$.
    \begin{definition}
         A (bi-infinite) matrix $\H \in \R^{\Sigma^* \times \Sigma^*}$ is \textbf{Hankel} if for all $p, p', s, s' \in \Sigma^*$ such that $p s = p' s'$, we have $\H(p,s) = \H(p',s')$. Given a Hankel matrix $\H \in \R^{\Sigma^* \times \Sigma^*}$, there exists a unique function $f : \Sigma^* \to \R$ such that $\H_f = \H$.
    \end{definition}
    A \textbf{weighted finite automaton} (WFA) of $n$ states over $\Sigma$ is a tuple $A = \wfa$, where $\balpha,$ $\bbeta \in \R^n$ are the vector of initial and final weights, respectively, and $\A_a \in \R^{n \times n}$ is the transition matrix associated with each symbol $a\in\Sigma$. In this paper, we only consider automata with real weights. In this case, every WFA $A$ realizes a function $f_A : \Sigma^* \to \R$, \emph{i.e.}, given a string  $x = x_1 \cdots x_t \in \Sigma^*$, it returns $f_A(x) = \balpha ^\top \A_{x_1} \cdots \A_{x_t} \bbeta = \balpha ^\top \A_x \bbeta$.
    Note that $f$ can be realized by a WFA if and only if $\H_f$ has finite rank $n$, in which case $n$ is the minimal number of states of any WFA realizing $f$ \citep{CP71,Fli}.
    


\subsection{Hankel Operators and AAK Theory}

    In this section, we introduce the definition of Hankel operators, both in the context of sequences and complex functions. Since we are concerned with functions on sequential data, it is easy to see why we might want to formalize Hankel operators in sequence spaces. The use of a functional representation, instead, is less intuitive, but it is needed to apply the results from AAK theory, which are stated in terms of complex functions.

    Given a function $f:\N \rightarrow \R$, we consider the Hankel matrix $\H_f$ defined by $\H_f(i,j)=f(i+j)$. This matrix can be interpreted as the expression of a linear \textbf{Hankel operator} $H_f:\ell^2 \rightarrow \ell^2$ in terms of the canonical basis of the sequence space. Alternatively, using the Fourier isomorphism, Hankel operators can be defined in a complex function space. In fact, the Hilbert space $\ell^2$ can be embedded into $\ell^2(\Z)$, which is isomorphic to $\mathcal{L}^2(\mathbb{T})$. Therefore, to each sequence $\boldsymbol{\mu}=(\mu_0, \mu_1, \dots)\in\ell^2$ we can associate two functions in the complex variable $z$:
    \begin{equation}
        \mu^-(z)=\sum_{j=0}^{\infty}\boldsymbol{\mu}_j z^{-j-1},  \quad\quad \mu^+(z)=\sum_{j=0}^{\infty} \boldsymbol{\mu}_j z^{j}.
    \end{equation}
    Conversely, we can associate any given function $\phi \in \mathcal{L}^2(\mathbb{T})$ with the sequence of its Fourier coefficients $\widehat{\phi}(n)$, for $n\in\Z$. The function space $\mathcal{L}^2(\mathbb{T})$ can be partitioned into two orthogonal subspaces, the \textbf{Hardy space} $\mathcal{H}^2$:
    \begin{equation}
        \mathcal{H}^2=\{ \phi \in \mathcal{L}^p(\mathbb{T}) : \widehat{\phi}(n)=0, n < 0\}
    \end{equation}
    and the negative Hardy space $\mathcal{H}^2_-$:
    \begin{equation}
        \mathcal{H}^2_-=\{ \phi \in \mathcal{L}^p(\mathbb{T}) : \widehat{\phi}(n)=0, n \geq 0\}
    \end{equation}
    \emph{i.e.} the spaces containing functions that have only nonnegative or negative Fourier coefficients, respectively. Note that $\mu^-(z)\in \mathcal{H}^2_-$ and $\mu^+(z)\in\mathcal{H}^2$, and it is possible to define the orthogonal projection
    \begin{equation}
        \mathbb{P}_-:\mathcal{L}^p(\mathbb{T})\rightarrow\mathcal{H}^2_-
    \end{equation}
    where $\mu^-(z)=\mathbb{P}_-(\mu^-(z)+\mu^+(z))$.
    We remark that it is possible to show that $\mathcal{H}^2$ is isomorphic to the set of square-integrable functions analytic on the disc. For a detailed presentation of these results we refer the reader to \citet{Nikolski}.
    \begin{definition}\label{Hankel2}
        Let $\phi$ be a function in the space $\mathcal{L}^2(\mathbb{T})$. A \textbf{Hankel operator} is an operator $H_{\phi}:\mathcal{H}^2 \rightarrow \mathcal{H}^2_-$ defined by 
        \begin{equation}
            H_{\phi}f=\mathbb{P}_-\phi f
        \end{equation}
        
        The function $\phi$ is said to be a \textbf{symbol} of $H_{\phi}$.
    \end{definition}
    We remark that the symbol is not unique, and that if $H_{\phi}$ is a bounded operator we can consider without loss of generality $\phi \in \mathcal{L}^{\infty}(\mathbb{T})$. In particular, the symbol of a finite rank Hankel operator is a rational function \citep{kronecker}. Moreover, the $\mathcal{L}^{\infty}$ norm of the symbol is related to the operator norm of the Hankel operator by the following relation: $\norm{H_{\phi}}\leq \norm{\phi}_{\infty}$ \citep{Nehari}. 
    
    Every Hankel matrix $\H$ satisfies the Hankel property $\H(j,k)=\{\alpha_{j+k}\}_{j,k \geq 0}$. Another way to express this property is to rephrase it as an operator identity. We consider the \textbf{shift operator} $S$ on the sequence space, with $S(x_0,x_1,\dots)=(0,x_0,x_1,\dots)$ and denote its left inverse by $S^*$. An operator $H$ is Hankel if and only if the following \textbf{Hankel equation} is satisfied:
    \begin{equation}\label{eq:hankel1}
        H S=S^*H.
    \end{equation}
    Alternatively, we can consider the shift operator $S$ in the function space, and generalize \equationref{eq:hankel1} for operators $H:\mathcal{H}^2 \rightarrow \mathcal{H}^2_-$:
    \begin{equation}\label{hankel_condition}
            HS=\mathbb{P}_-SH.
    \end{equation}
    
    We can now introduce the main result of \citet{AAK71}. 
    The theorem shows that for infinite dimensional Hankel matrices the constraint of preserving the Hankel property does not affect the approximation error.
    \begin{theorem}[AAK Theorem]\label{theorem:aakop}
        Let $H_{\phi}$ be a compact Hankel operator of rank $n$, and let $k<n$. We denote with $\H$ the matrix of $H_{\phi}$, and with $\sigma_0 \geq \dots \geq \sigma_{n-1}>0$ the singular numbers. Then there exists a unique Hankel operator $H_g$ with matrix $\mat{G}$ of rank $k$ such that: 
        \begin{equation}\label{eqoper}
            \norm{H_{\phi} - H_g} = \norm{\H-\mat{G}}= \sigma_k.
        \end{equation}
        We say that $\mat{G}$ is the optimal approximation of size $k$ of $\H$.
    \end{theorem}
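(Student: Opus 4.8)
The plan is to separate the two inequalities hidden in \equationref{eqoper}. The lower bound $\norm{H_\phi - H_g} \geq \sigma_k$ for \emph{any} operator of rank $k$ --- Hankel or not --- is the routine direction: it is precisely the Courant--Fischer / Eckart--Young characterization of the $(k{+}1)$-st singular value of the compact operator $H_\phi$, so no new idea is needed there. The entire content therefore lies in exhibiting a \emph{Hankel} operator of rank $k$ that attains this bound, i.e.\ in proving that imposing the Hankel constraint costs nothing in operator norm. This is the surprising part of the statement and where I would spend all the effort.

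For the upper bound I would work with the Schmidt pair attached to $\sigma_k$: unit vectors $\xi \in \mathcal{H}^2$ and $\eta \in \mathcal{H}^2_-$ with $H_\phi \xi = \sigma_k \eta$ and $H_\phi^* \eta = \sigma_k \xi$. The crucial lemma --- the heart of AAK theory --- is that for such a pair the ratio $\eta/\xi$, read as a meromorphic function through the Fourier isomorphism introduced above, is unimodular almost everywhere on $\mathbb{T}$, equivalently $|\xi| = |\eta|$ a.e. I would establish this by feeding the two Schmidt equations into the defining relations $H_\phi f = \mathbb{P}_-\phi f$ and $HS = \mathbb{P}_- S H$: the first yields $\phi\xi - \sigma_k \eta = (I - \mathbb{P}_-)(\phi\xi) \in \mathcal{H}^2$, the second gives a conjugate companion relation, and together, using $\bar z = 1/z$ on $\mathbb{T}$, they force the modulus of $\eta/\xi$ to be constant. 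Setting $g := \phi - \sigma_k\,\eta/\xi$ then produces a symbol with $\norm{\phi - g}_\infty = \sigma_k$; since $H_\phi - H_g = H_{\phi - g}$ by linearity in the symbol, Nehari's inequality $\norm{H_{\phi-g}} \leq \norm{\phi - g}_\infty$ delivers $\norm{H_\phi - H_g} \leq \sigma_k$, matching the lower bound.

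It then remains to check that $H_g$ has rank exactly $k$ and that it is the unique minimizer. For the rank, I would use that $H_\phi$ has finite rank $n$, so its symbol is rational (by Kronecker's theorem, cited above), and the extremal symbol $\phi - g$ factors as $\sigma_k$ times a unimodular quotient whose analytic (inner) factor carries the rank data; counting the zeros of that factor in $\mathbb{D}$ shows that $H_g$ retains exactly the singular values $\sigma_0,\dots,\sigma_{k-1}$ and sheds $\sigma_k,\dots,\sigma_{n-1}$, so $\rank \mat{G} = k$. For uniqueness I would argue that any rank-$k$ Hankel operator achieving error $\sigma_k$ must realize the same constant-modulus extremal symbol: the equality case in the min--max inequality pins down the action of $H_\phi - H_g$ on the top singular subspace, and rationality of the symbols in the finite-rank setting eliminates the residual freedom.

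The main obstacle I anticipate is the constant-modulus lemma together with the accompanying rank count. Converting the operator identity $HS = \mathbb{P}_- S H$ into a pointwise analytic statement about $\eta/\xi$, and then controlling precisely how many singular values survive in $H_g$, requires the full Hardy-space apparatus --- inner--outer factorization and the Nehari/AAK machinery --- rather than elementary linear algebra. By contrast, the lower bound is immediate, and, once the lemma is in hand, both the norm estimate via Nehari's inequality and the uniqueness argument are comparatively formal.
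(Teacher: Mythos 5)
The paper never proves this theorem: it is quoted verbatim from Adamyan--Arov--Krein (1971) as background, so your attempt can only be measured against the classical proof --- which, to your credit, you reconstruct in its correct overall architecture. The lower bound via the singular-value minimax is indeed routine, the Schmidt pair $(\xi,\eta)$ with $H_\phi\xi=\sigma_k\eta$, $H_\phi^*\eta=\sigma_k\xi$ is the right starting point, the modulus lemma $|\xi|=|\eta|$ a.e.\ on $\mathbb{T}$ is the right key lemma, and $g=\phi-\sigma_k\,\eta/\xi$ followed by Nehari's inequality is exactly how the norm bound $\norm{H_\phi-H_g}\le\sigma_k$ is closed. One local correction: the ``conjugate companion relation'' you need does not come from the commutation identity $HS=\mathbb{P}_-SH$ but from the adjoint Schmidt equation itself, $\mathbb{P}_+\bar\phi\eta=\sigma_k\xi$, i.e.\ $\bar\phi\eta-\sigma_k\xi\in\mathcal{H}^2_-$; pairing it with $\phi\xi-\sigma_k\eta\in\mathcal{H}^2$ and using that $\sigma_k\left(|\xi|^2-|\eta|^2\right)$ is a real function whose analytic part has vanishing mean forces $|\xi|=|\eta|$ a.e.

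The genuine gap is the rank count, and your stated mechanism for it is wrong: the optimal Hankel approximant is \emph{not} a spectral truncation, and it is false that $H_g$ ``retains exactly the singular values $\sigma_0,\dots,\sigma_{k-1}$'' --- its singular values generally differ from those of $H_\phi$; the theorem only pins down $\norm{H_\phi-H_g}=\sigma_k$. The correct argument runs differently. First, $H_g\xi=H_\phi\xi-\mathbb{P}_-\bigl(\sigma_k(\eta/\xi)\xi\bigr)=\sigma_k\eta-\sigma_k\eta=0$, so $\xi\in\ker H_g$. Second, kernels of Hankel operators are invariant under the shift $S$ (this is where the commutation relation actually earns its keep), so by Beurling's theorem $\ker H_g\supseteq\theta\mathcal{H}^2$ where $\theta$ is the inner factor of $\xi$, giving $\rank H_g\le\dim(\mathcal{H}^2\ominus\theta\mathcal{H}^2)$. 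Third --- and this is the technical core you have skipped --- one must prove $\theta$ is a finite Blaschke product of degree at most $k$, by showing that a higher degree would produce more than $k$ pairwise orthogonal vectors on which $H_\phi$ acts with norm at least $\sigma_k$, contradicting the minimax characterization of $\sigma_k$; this step must also handle the case where $\sigma_k$ has multiplicity greater than one, so the Schmidt pair cannot be chosen carelessly. Finally, uniqueness is not ``comparatively formal'': one must show that \emph{any} optimal rank-$k$ Hankel $G$ satisfies $(H_\phi-G)\xi=\sigma_k\eta$ for \emph{every} $\sigma_k$-Schmidt vector $\xi$, whence its symbol error is forced to equal $\sigma_k\eta/\xi$; your equality-case remark gestures at this but supplies neither the argument nor the multiplicity bookkeeping it requires. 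As a roadmap your proposal is faithful to AAK; as a proof, the degree bound on the inner factor and the uniqueness analysis --- the two places where the theorem's content actually lives --- are missing, and the singular-value claim would need to be deleted, not repaired.
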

    Note that the proof of this theorem is constructive, so it provides us with a way to build the optimal approximation. The method relies on $\phi$ and $g$, the symbols of the original operator and of the best approximation, respectively, and on the following fundamental inequality:
    \begin{equation}\label{eq:aakineq}
        \norm{H_{\phi} - H_g} \leq \norm{\phi-g}_{\infty}\leq \sigma_k.
    \end{equation}
    Therefore, if we want to apply AAK theory to our setting, it is fundamental to associate a complex function to the Hankel matrix we are trying to minimize. 

\subsection{Multivariable Operator Theory}\label{sec:nc}
    
    The ideal formalism to extend the previous results to a noncommutative setting is provided by the theory of Fock spaces. The choice of this kind of space is standard in noncommutative multivariable operator theory.
    
    Let $\mathcal{H}_n$ be a Hilbert space, $\mathcal{H}_n^{\otimes k}$ the tensor product of $k$ copies of $\mathcal{H}_n$, and $\mathcal{H}_n^{\otimes 0}:=\C$.
    \begin{definition}
        Let $\mathcal{H}_n$ be a $n$-dimensional Hilbert space. The (symmetric) \textbf{Fock space} $F^2$ of $\mathcal{H}_n$ is:
        \begin{equation}
            F^2=F^2(\mathcal{H}_n)=\bigoplus_{k\geq0}\mathcal{H}_n^{\otimes k}= \mathbb{C} \oplus \mathcal{H}_n \oplus  (\mathcal{H}_n \otimes \mathcal{H}_n) \oplus \dots
        \end{equation}
    \end{definition}
    Let $\free$ be the free monoid on $n$ generators $g_1,\dots g_n$, with identity element $g_0$. Given $\alpha\in\free$, with $\alpha=g_{i_1}g_{i_2}\cdots g_{i_k}$, we define its length by $|\alpha|=k$, and $|g_0|=0$. Analogously, we can define an element of the Fock space $e_{\alpha}=e_{i_1}\otimes e_{i_2}\otimes\cdots \otimes e_{i_k}$ and $e_{i_0}=1$. Note that $\mathcal{B}=\{e_{g_i}: g_i\in\free\}$ is an orthonormal basis for the Fock space $F^2$. The Fock space is isomorphic to the Hilbert space of square summable sequences indexed by elements in $\Sigma^*$.
    The Fock space can be also identified with $\mathcal{H}^2(\free)$, a canonical NC analogue of the Hardy space.
    Given a collection of $n$ NC variables (matrices or operators) $z=[z_1,\dots, z_n]$, with $z^{\alpha}:=z_{i_1}\cdot z_{i_2}\cdots z_{i_k}$, we can consider $f\in F^2$ and represent it as a formal power series: $f(z)=\sum_{\alpha\in\free}\widehat{f}_{\alpha}z^{\alpha}$, converging for $\sum_i||z_iz_i^*||<1$.
    We define the \textbf{NC Hardy space} as:
    \begin{equation}
        \Hfree=\left\{\sum_{\alpha\in\free}\widehat{f}_{\alpha}z^{\alpha}: \sum_{\alpha\in\free}||f_{\alpha}||^2<\infty \right\}.
    \end{equation}
    This means that we can choose between a ``sequence'' interpretation ($F^2$) or a ``functional'' interpretation ($\mathcal{H}^2(\free)$) of the results.
    We can now use sequences of operators to extend the definition of a Hankel operator in a way meaningful for NC spaces \citep{popescu}. This definition relies on extending the functional version of the Hankel equation (Equation \ref{hankel_condition}).
    \begin{definition}\label{def:ncop}
    Let $X=[X_1, \dots ,X_n]$, $X_i\in B(\mathcal{Y})$ be an arbitrary sequence of bounded operators on a Hilbert space $\mathcal{Y}$, and let $T=[T_1,\dots, T_n]$, $T_i\in B(\mathcal{H})$. Suppose $\mathcal{H}=\mathcal{H}_-\oplus\mathcal{H}_+$, with $\mathcal{H}_+$ invariant with respect to each $T_i\in B(\mathcal{H})$. Let $\mathbb{P}_-$ be the orthogonal projection on $\mathcal{H}_-$. A \textbf{NC Hankel operator} is a bounded linear operator $\Gamma:\mathcal{Y}\rightarrow \mathcal{H}_-$ such that:
        \begin{equation}
            \Gamma X_i= \mathbb{P}_-T_i\Gamma \quad\quad \text{for any}\,\,\, i=1,\dots,n.
        \end{equation}
    \end{definition}
   The definition of symbol provided in the commutative case can be generalized as follows. 
    \begin{definition}
        A \textbf{multiplier} is a bounded linear operator $A:\mathcal{Y}\rightarrow \mathcal{H}$ such that:
        \begin{equation}
            AX_i= T_iA \quad \text{for any}\,\, i=1,\dots,n.
        \end{equation}    
    \end{definition}
    Given a multiplier, it is always possible to associate with it a Hankel operator \citep{popescu} such that $||\Gamma_A||\leq ||A||$, and defined as:
    \begin{equation}
        \Gamma_Ay=\mathbb{P}_-Ay \quad \text{for}\,\, y\in\mathcal{Y}.
    \end{equation}
    
    We have the following noncommutative version of AAK theorem \citep{popescu}.
    \begin{theorem}[NC AAK Theorem]\label{theorem:fixedpoint2}
    Let $X=[X_1, \dots ,X_n]$, $X_i\in B(\mathcal{Y})$, and let $T=[T_1,\dots, T_n]$, $T_i\in B(\mathcal{H})$, be such that, for $y_i\in\mathcal{Y}$ and $h_i\in\mathcal{H}$:
    \begin{equation}
        \norm{X_1y_1+\dots+ X_ny_n}^2\geq \norm{y_1}^2+\dots+\norm{y_n}^2
    \end{equation}
    and 
    \begin{equation}
        \norm{T_1h_1+\dots+ T_nh_n}^2\leq \norm{h_1}^2+\dots+\norm{h_n}^2
    \end{equation}
    Let $\Gamma_A:\mathcal{Y}\rightarrow \mathcal{H}_-$ be a NC Hankel operator, with $\Gamma X_i= \mathbb{P}_-T_i\Gamma$ for any $i=1,\dots,n$. 
    Then
    there exists an optimal approximation of $\Gamma$ of size at most $k$.
    \end{theorem}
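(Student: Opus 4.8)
The plan is to transport the operator-theoretic proof of the classical AAK theorem (\theoremref{theorem:aakop}) into the Fock-space setting, using the noncommutative dilation and commutant-lifting machinery for which the two norm hypotheses are tailor-made. First I would record what those hypotheses mean: $\norm{X_1y_1+\dots+X_ny_n}^2\geq \norm{y_1}^2+\dots+\norm{y_n}^2$ says the row $[X_1,\dots,X_n]$ is bounded below (an isometric-type row, the analogue of the shift on the domain), while $\norm{T_1h_1+\dots+T_nh_n}^2\leq \norm{h_1}^2+\dots+\norm{h_n}^2$ says $[T_1,\dots,T_n]$ is a row contraction. These are exactly the assumptions under which the noncommutative commutant lifting theorem of \citet{popescu} applies, and the intertwining relation $\Gamma X_i=\mathbb{P}_-T_i\Gamma$ exhibits $\Gamma_A$ as the compression to $\mathcal{H}_-$ of a multiplier intertwining $X$ and $T$.

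Next I would set up the singular-value apparatus. Since $\Gamma_A$ is compact, it admits a Schmidt decomposition with singular numbers $\sigma_0\geq\sigma_1\geq\dots\geq 0$ and Schmidt pairs $(\xi_j,\eta_j)$ satisfying $\Gamma_A\xi_j=\sigma_j\eta_j$ and $\Gamma_A^*\eta_j=\sigma_j\xi_j$. The lower-bound half of the statement is then immediate from the variational characterization of singular numbers: every operator of rank at most $k$ is at operator-norm distance at least $\sigma_k$ from $\Gamma_A$, so it suffices to exhibit one \emph{Hankel} operator of size at most $k$ meeting this bound.

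The core of the argument is to realize that bound by a Nehari-type lifting. Because the symbol-to-Hankel assignment $\Gamma_{A'}y=\mathbb{P}_-A'y$ is linear and satisfies $\norm{\Gamma_{A'}}\leq\norm{A'}$ (the inequality recorded in the excerpt), finding a good rank-$k$ approximant amounts to finding a symbol $G$ such that $\Gamma_G$ has size at most $k$ and the symbol distance $\norm{A-G}$ equals $\sigma_k$; then $\norm{\Gamma_A-\Gamma_G}=\norm{\Gamma_{A-G}}\leq\norm{A-G}=\sigma_k$. To produce such a $G$, I would restrict the problem to the orthogonal complement of the span of the top $k$ Schmidt directions, where the compressed Hankel operator has norm exactly $\sigma_k$, and apply noncommutative commutant lifting there to obtain a multiplier attaining this norm whose associated Hankel operator agrees with $\Gamma_A$ off the $k$ peeled directions. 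Combined with the lower bound, this forces equality at $\sigma_k$ and yields existence.

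The main obstacle I anticipate is exactly the point at which the one-variable proof fails to transfer verbatim. In the commutative case one writes the optimal symbol as $\sigma_k$ times a ratio of the two Schmidt functions, which automatically has modulus $\sigma_k$ on $\mathbb{T}$ and produces a Hankel operator of rank precisely $k$; in the Fock space there is no scalar ratio of noncommutative functions, so controlling the \emph{rank} of $\Gamma_G$ (as opposed to merely the norm of the difference) requires a careful analysis of how the Hankel intertwining relation interacts with the lifting. Relatedly, commutant lifting guarantees the \emph{existence} of the extremal multiplier but does not produce it in closed form, which is why the theorem as stated asserts only existence; converting this into an explicit construction is precisely the second, still-open challenge.
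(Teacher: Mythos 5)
The first thing to note is that the paper contains no proof of this statement at all: \theoremref{theorem:fixedpoint2} is imported verbatim as background from \citet{popescu}, and the authors explicitly treat it as a black box (indeed, their stated open problem is that this theorem is \emph{not} constructive). So there is no in-paper argument to match your proposal against; the fair comparison is with the commutant-lifting machinery of Popescu that the citation points to. On that score your framing is sound: reading the two norm hypotheses as ``the row $[X_1,\dots,X_n]$ is bounded below'' and ``$[T_1,\dots,T_n]$ is a row contraction,'' and routing the norm estimates through $\norm{\Gamma_A-\Gamma_G}=\norm{\Gamma_{A-G}}\leq\norm{A-G}$ for a multiplier $G$, are exactly the right ingredients, and the lower bound $\norm{\Gamma_A-R}\geq\sigma_k$ for any operator $R$ of rank at most $k$ is standard (granting compactness, which you add silently --- the statement as transcribed assumes neither compactness nor even quantifies $k$, so some such repair is needed, but flag it).

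The genuine gap is your third step, and it is not a technicality. Compressing $\Gamma_A$ to the orthogonal complement of the span of the top $k$ Schmidt vectors destroys the Hankel structure: that subspace is in general neither invariant nor co-invariant for the tuples $X$ and $T$, so the compressed operator satisfies no intertwining relation of the form $\Gamma X_i=\mathbb{P}_-T_i\Gamma$, and the noncommutative commutant lifting theorem has no intertwiner to lift. Worse, the clause ``a multiplier attaining this norm whose associated Hankel operator agrees with $\Gamma_A$ off the $k$ peeled directions'' is, unpacked, the assertion that there exists a multiplier $G$ with $\Gamma_{A-G}$ of norm $\sigma_k$ and rank at most $k$ --- i.e., it is a restatement of the theorem, not a step toward it. In the classical proof this is precisely the point where one writes the extremal symbol as $\sigma_k$ times the unimodular ratio of Schmidt functions and invokes Kronecker's theorem to control the rank; you correctly observe in your final paragraph that no noncommutative analogue of that ratio exists, but having conceded that, the sentence ``this forces equality at $\sigma_k$ and yields existence'' is asserted rather than derived. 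As written, the proposal establishes the lower bound and the Nehari-type norm inequality but defers the existence of a low-rank Hankel approximant --- the actual content of the theorem --- to an unproved lifting claim; closing that step is exactly what Popescu's argument supplies and what your outline would need to reproduce.
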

    
    We conclude this section with a quick overview of NC rational functions \citep{NCrational}. 
    A NC rational expression is any syntactically valid expressions involving several NC variables, scalars, $+$, $\cdot$, $^{-1}$ and parentheses. A \textbf{NC rational function} is an equivalence class between rational expressions, where we say that $r_1$ and $r_2$ belong to the same equivalence class if $r_1$ can be transformed into $r_2$ by algebraic manipulations. Unlike the commutative case, a NC rational function does not admit a canonical coprime fraction representation \citep{vinnikov2}. A ``canonical'' way to represent NC rational functions comes from the theory of formal languages \citep{Fli,berstel,SCHUTZENBERGER}. 
    In particular, every NC rational function containing $0$ in its domain admits a \textbf{minimal realization} of size $n$. If each $\mat{A}_j$ is a square matrix of size $n$, $\mat{b}$, $\mat{c}$ are vectors of size $n$ and each complex variable $z_j$ is a square matrix of size $m$, we have:
    \begin{equation}\label{eq:unpack}
        r(z)= \mat{c}^*\otimes\mat{1}_m\left(\mat{1}_n\otimes\mat{1}_m-\sum \mat{A}_j\otimes z_j\right)^{-1}\mat{b}\otimes\mat{1}_m=\mat{c}^*\left(\mat{1}-\sum \mat{A}_jz_j\right)^{-1}\mat{b}.
    \end{equation}
    The last equality can be used as a more concise representation of the rational function.

\section{A Framework for the Approximate Minimization Problem}
    
    Given a minimal WFA, we consider its Hankel matrix $\H$, with rank $n$ equal to the number of states. We propose to reformulate the approximation problem as a low-rank approximation of $\H$. Thus, we would like to find a matrix of rank $k<n$, approximating $\H$ in the spectral norm, and recover a WFA having $k$ states using the spectral method \citep{BalleCLQ14}. A well known theorem by \citet{Eckart} states that the optimal approximation of $\H$ is obtained by truncating its SVD, but the resulting matrix is not necessarily Hankel. This is a problem, as we want to extract from the matrix a WFA. Leveraging AAK theory, it is possible to find a Hankel matrix attaining the same bound as the optimal approximation. In the next two sections, we show how to associate to $\H$ a Hankel operator and a symbol in the case of one-letter and multi-letter alphabets. This is a necessary step in order to apply \theoremref{theorem:aakop}, since its constructive proof relies on the definition of a symbol. 

\subsection{One-letter Alphabets}
    
    Let $|\Sigma|=1$. Then, $\Sigma^*$ can be identified with $\N$ by associating to each string its length. 
    Since $\N$ can be embedded into $\Z$, we can interpret $f:\Sigma^*\rightarrow \R$ as $f:\Z \rightarrow \R$. This fundamental step allows us to apply the Fourier isomorphism to reformulate the problem in the Hardy space, where it can be solved using \theoremref{theorem:aakop}. This setting has been studied by \citet{AAK-WFA} and \citet{AAK-RNN} in the context of WFAs and black-box models, respectively.

\subsubsection{Defining a Hankel Operator and a symbol}
        
    Let $A=\wa$ be a minimal WFA with $n$ states over a one-letter alphabet, computing $f_A:\Sigma^* \rightarrow \R$ with Hankel matrix $\H$. To apply AAK theory, we need to associate with $\H$ a Hankel operator. We can define two different operators, $H_f$ and $H_{\phi}$. On the one hand, we can consider the Hankel operator acting over sequences $H_f:\ell^2\rightarrow\ell^2$, associated with the function $f_A:\Sigma^* \rightarrow \R$ with Hankel matrix defined by
    \begin{equation}
        \H(i,j)=f_A(i+j) \quad\quad\quad \text{for} \,\, i,j\geq 0.
    \end{equation}
    The matrix $\H$ can then be represented as:
    \begin{equation}
        \H=\H_f= \begin{pmatrix} f_A(0) & f_A(1) & f_A(2) & \dots \\
                              f_A(1) & f_A(2) & f_A(3) &\dots \\
                              f_A(2) & f_A(3) & f_A(4) &\dots \\
                                \vdots& \vdots &\vdots&\ddots
            \end{pmatrix}.
    \end{equation}
    On the other hand, we can interpret $\H$ as the matrix $\H_{\phi}$ associated with a Hankel operator over Hardy spaces, $H_{\phi}:\mathcal{H}^2 \rightarrow \mathcal{H}^2_-$. Now, the operator and matrix are related (by definition) to a complex function $\phi\in  \mathcal{L}^2(\mathbb{T})$, the symbol. The entries of the matrix are defined by means of the Fourier coefficients of $\phi$ as 
    \begin{equation}
        \H(j,k)= \widehat{\phi}(-j-k-1) \quad\quad\quad \text{for} \,\,j,k\geq 0.
    \end{equation}
    Note that the function $\mathbb{P}_-\phi=\widehat{\phi}(-j-k-1)$ is a complex rational function \citep{kronecker}. 
    The matrix $\H$ can be represented as:
    \begin{equation}
        \H=\H_{\phi}=\begin{pmatrix} \widehat{\phi}(-1) & \widehat{\phi}(-2) & \widehat{\phi}(-3) & \dots \\
                              \widehat{\phi}(-2) & \widehat{\phi}(-3) & \widehat{\phi}(-4) &\dots \\
                              \widehat{\phi}(-3) & \widehat{\phi}(-4) & \widehat{\phi}(-5) &\dots \\
                                \vdots& \vdots &\vdots&\ddots
            \end{pmatrix}.
    \end{equation}

    
    We can derive the relationship between $f_A$ and $\phi$: since we have $\H=\H_f=\H_{\phi}$, the two representations of the Hankel matrix need to coincide. We  have:
    \begin{equation}
        \H= \begin{pmatrix} f_A(0) & f_A(1) & f_A(2) & \dots \\
                              f_A(1) & f_A(2) & f_A(3) &\dots \\
                              f_A(2) & f_A(3) & f_A(4) &\dots \\
                                \vdots& \vdots &\vdots&\ddots
            \end{pmatrix}
            =\begin{pmatrix} \widehat{\phi}(-1) & \widehat{\phi}(-2) & \widehat{\phi}(-3) & \dots \\
                              \widehat{\phi}(-2) & \widehat{\phi}(-3) & \widehat{\phi}(-4) &\dots \\
                              \widehat{\phi}(-3) & \widehat{\phi}(-4) & \widehat{\phi}(-5) &\dots \\
                                \vdots& \vdots &\vdots&\ddots
            \end{pmatrix}
    \end{equation}
    from which we 
    obtain: 
    \begin{equation}\label{eq:symbol}
        f(n)=\widehat{\phi}(-n-1).    
    \end{equation}
    Therefore, in the case of a WFA $A=\wa$, we obtain the rational function:
    \begin{equation}\label{eq:symbolsum}
        \mathbb{P}_-\phi= \sum_{k\geq 0}f(k) z^{-k-1} = \sum_{k\geq 0} \balpha^{\top}\A^k \bbeta z^{-k-1} = \balpha^{\top}(z\mat{1}-\A)^{-1} \bbeta,
    \end{equation}
    where the last equality holds if $\rho(A)<1$. Now that we have the WFA's symbol, we have all the elements needed to find the best approximation using the constructive proof of \theoremref{theorem:aakop} \citep{AAK-WFA}.

\subsection{Multi-letter Alphabets}
    
    In this section, we consider a WFA over $\Sigma$, with $|\Sigma|=d>1$, so that $\Sigma^*$ is the free monoid generated by $d$ elements. $\Sigma^*$ is not abelian, so it cannot be embedded into $\Z$, and we cannot directly apply Fourier analysis like in the previous section. We first find a noncommutative version of \equationref{hankel_condition}, and suitable transformations to play the roles of the shifts. We then find an appropriate generalization of the Hardy spaces. This allows us to define an equivalent of \definitionref{def:ncop} in the case of Hankel matrices arising from a WFA. Finally, we associate the NC Hankel operator with a NC rational function by leveraging a property of the multipliers.
    

\subsubsection{Defining a Hankel Operator and a symbol}
    
    A WFA $A$ over $\Sigma$, with $|\Sigma|=d$, computes a function $f:\Sigma^*\rightarrow \R$, with Hankel matrix $\H$. This function can be interpreted as an element in the Fock space $F^2$ (see \appendixref{example:fock2}). We consider the shift operators defined on the Fock space. 
    For $i=1,\dots, d$, the \textbf{NC left shift} $S=(S_1,\dots, S_d)$ and \textbf{NC right shift} $R=(R_1,\dots, R_d)$ are defined by:
        \begin{equation}
             S_i (e_{\alpha}):= e_i\otimes e_{\alpha}=e_{i\alpha}, \quad \quad R_i (e_{\alpha}):=e_{\alpha}\otimes e_i = e_{\alpha i}.
        \end{equation}
    We can express the right shift in terms of the left one by using a unitary operator $U$, the \emph{flipping operator}: $ R_i=U^*S_iU$, where 
    \begin{equation}
        U(e_{i_1}\otimes e_{i_2}\otimes\cdots\otimes e_{i_k})=e_{i_k}\otimes \cdots\otimes e_{i_2}\otimes e_{i_1}.
    \end{equation}
    We obtain a NC version of \equationref{eq:hankel1}. 
    \begin{theorem}\label{thm:hankeleq}
        Let $|\Sigma|=d$, and let $\H$ be a WFA's Hankel matrix. Let $S$ and $R$
        be the NC left and right shifts on $F^2$, $S^*$ and $R^*$ their adjoints. Then, the following equation holds:
        \begin{equation}\label{eq:hankelnc}
            \H S_i=R^*_i\H \quad\quad\quad \text{for}\,\, i=1,\dots,d.
        \end{equation}
    \end{theorem}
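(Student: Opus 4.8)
The plan is to prove the identity \equationref{eq:hankelnc} entrywise, by evaluating both operators on the orthonormal basis $\{e_{\alpha}:\alpha\in\free\}$ of the Fock space and comparing matrix entries. Recall that $\H$ is the matrix of the Hankel operator on $F^2$ with entries $\H(p,s)=f(ps)$ for $p,s\in\free$, so that $\H\,e_s=\sum_{p\in\free}f(ps)\,e_p$. The entire content of the statement is that the Hankel property $\H(p,s)=f(ps)$ is exactly what permits a single letter to be transferred from the column index to the row index, and I would make this the guiding idea of the computation.

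First I would record the action of the adjoint $R_i^*$. From $R_i e_{\alpha}=e_{\alpha i}$ and the definition of the adjoint, $\langle R_i^* e_{\gamma},e_{\alpha}\rangle=\langle e_{\gamma},e_{\alpha i}\rangle$, so that $R_i^* e_{\gamma}=e_{\gamma'}$ when $\gamma=\gamma' i$ ends in the letter $i$, and $R_i^* e_{\gamma}=0$ otherwise; equivalently, in terms of matrix entries, $R_i^*(p,t)=\langle R_i e_p,e_t\rangle=[\,t=pi\,]$. Similarly $S_i(t,s)=\langle e_t,S_i e_s\rangle=[\,t=is\,]$.

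Next I would compute the two sides as matrices. Using $S_i e_s=e_{is}$, the $(p,s)$ entry of the left-hand side is
\begin{equation}
(\H S_i)(p,s)=\langle e_p,\H S_i e_s\rangle=\H(p,is)=f(p\,i\,s),
\end{equation}
while the formula for $R_i^*$ gives the $(p,s)$ entry of the right-hand side as
\begin{equation}
(R_i^*\H)(p,s)=\H(pi,s)=f(p\,i\,s).
\end{equation}
Both entries coincide, and the equality is nothing but the Hankel property applied to the two factorizations $p\cdot(is)=(pi)\cdot s$ of the same word. Since the operators agree on every basis vector, they are equal, which is \equationref{eq:hankelnc}.

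The computation itself is routine; the step I expect to require the most care is the bookkeeping of the noncommutative conventions. Because $\free$ is not abelian, prepending a letter to the suffix ($S_i$ acting on the columns) and appending a letter to the prefix ($R_i$, hence $R_i^*$, acting on the rows) are genuinely different operations, and one must check that it is precisely the \emph{left} shift on the column side that is balanced by the adjoint of the \emph{right} shift on the row side: swapping $S$ and $R$, or acting on the wrong side, would break the identity. A secondary point is that $\H$, $S_i$ and $R_i^*$ need not be bounded on all of $F^2$, but since the claim is the formal operator identity \equationref{eq:hankelnc}, verifying it on the basis $\{e_{\alpha}\}$ (where every sum is finite) is exactly what is needed; the boundedness hypotheses enter only later, when $\H$ is genuinely interpreted as a NC Hankel operator in the sense of \definitionref{def:ncop}.
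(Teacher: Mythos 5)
Your proof is correct and takes essentially the same approach as the paper's: both verify the identity on the basis $\{e_{\alpha}\}$ of $F^2$ and reduce it, via the actions $S_i e_s = e_{is}$ and $R_i^* e_{\gamma}=e_{\gamma'}$ for $\gamma=\gamma' i$ (and $0$ otherwise), to the Hankel property $f(p\,(is))=f((pi)\,s)$. The only difference is cosmetic: the paper carries out the computation column-by-column in the power-series (NC Hardy space) representation, while you compare matrix entries directly.
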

     \begin{proof}
        For this proof, we leverage the functional representation of the Fock space. We recall that in the NC Hardy space, the left shift is equivalent to left multiplication by one of the noncommutative variables: $S_if=z_if$. Moreover, the function $f:\Sigma^*\rightarrow \R$ associated to the Hankel matrix can be represented by means of a formal power series in the NC Hardy space $f= \sum_{\alpha\in\Sigma^*}f(\alpha)z^{\alpha}$.
        This function corresponds to the first column of the Hankel matrix. Analogously, it is easy to see that the column at index $\alpha$ is: 
        \begin{equation}
            \H e_{\alpha}= \sum_{\beta\in\Sigma^*}f(\beta\alpha)z^{\beta}.
        \end{equation}
        Therefore:
        \begin{equation}
             \H S_i(e_{\alpha})= \sum_{\beta\in\Sigma^*}f(\beta i\alpha)z^{\beta}.
        \end{equation}
        On the other hand, we can consider the adjoint of the right shift:
        \begin{equation}
            R^*_i\H e_{\alpha}= R^*_i\sum_{\beta\in\Sigma^*}f(\beta\alpha)z^{\beta} =\sum_{\beta'\in\Sigma^*}f(\beta'i\alpha)z^{\beta'}.
        \end{equation}
        Thus, for any $i=1,\dots,d$, we have: $\H S_i=R^*_i\H$, which  concludes the proof. This shows that the Hankel matrix arising from a WFA defined over a multi-letter alphabets satisfies the NC version of the Hankel equation. 
    \end{proof}
    To extend \definitionref{def:ncop}, we need to find appropriate spaces $\mathcal{Y}$, $\mathcal{H}_-$, and $\mathcal{H}=\mathcal{H}_-\oplus \mathcal{H}_+$.
    It has become clear that the natural noncommutative generalization of $\ell^2(\N)$ is the Fock space, and that the NC Hardy space generalizes the Hardy space, so we set $\mathcal{Y}=\mathcal{H}_+=F^2$ (or $\mathcal{Y}=\mathcal{H}_+=\mathcal{H}^2(\Sigma^*)$). 
    In the one-letter case, the role of $\mathcal{H}$ was played by $\mathcal{L}^2(\mathbb{T})\cong \ell^2(\Z)$. A function $f\in\mathcal{L}^2(\mathbb{T})$ can be represented using the sequence of its Fourier coefficients, indexed by powers of the complex variable $z$.
    Analogously, we can set $\mathcal{H}=F_0^2 \oplus F^2$, and interpret it as the set of infinite sequences that are indexed by negative and nonnegative powers of the NC variables $z_1,\dots z_d$ (the $F_0^2$ and $F^2$ components, respectively). In \appendixref{example:fock2} we present an example of the application of this mathematical framework to a WFA over a $2$-letter alphabet.
    The following theorem shows that the formalization we chose is not only suitable to describe the Hankel matrix of a WFA, but it also leads to an appropriate definition of NC Hankel operator.
    \begin{theorem}\label{theorem:ncHank}
        Let $S=(S_1,\dots, S_d)$, $R=(R_1,\dots, R_d)$ be the left and right shifts on $F^2$, $S^*$ and $R^*$ their adjoints. Let $\mathcal{Y}=F^2$, $\mathcal{H}=F_0^2 \oplus F^2$, where $F^2_0=\bigoplus_{k>0}(\R^d)^{\otimes k}$. We set $\mathcal{H}_-=F^2_0$ and $\mathcal{H}_+=F^2$, and we define, for $i=1, \dots, d$, a bilateral shift on $\mathcal{H}$:
        \begin{equation}
            \begin{cases}
                \overline{R}_i(e_{\alpha})= R^*_i(e_{\alpha}) \quad\quad\quad \text{for} \,\,\, e_{\alpha}\in \mathcal{H}_- \\
                \overline{R}_i(e_{\alpha})= R_i(e_{\alpha}) \quad\quad\quad \text{for} \,\,\, e_{\alpha}\in \mathcal{H}_+
            \end{cases}.
        \end{equation}
        Let $\mathbb{P}_-$ be the orthogonal projection on $\mathcal{H}_-$.
        
        Then, the operator $H:\mathcal{Y}\rightarrow \mathcal{H}_-$ defined by the following property:
        \begin{equation}
            H S_i= \mathbb{P}_-\overline{R}_i H \quad\quad \text{for any}\,\,\, i=1,\dots,n
        \end{equation}
        is a NC Hankel operator according to \definitionref{def:ncop}. 
    \end{theorem}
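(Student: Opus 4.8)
The plan is to prove the statement by checking directly that the data $(\mathcal{Y},\mathcal{H},\mathcal{H}_-,\mathcal{H}_+,\{S_i\},\{\overline{R}_i\})$ instantiate every hypothesis of \definitionref{def:ncop}, with the abstract sequence $X=[X_1,\dots,X_n]$ played by the left shifts $S=(S_1,\dots,S_d)$ acting on $\mathcal{Y}=F^2$, and the abstract sequence $T=[T_1,\dots,T_n]$ played by the bilateral shifts $\overline{R}=(\overline{R}_1,\dots,\overline{R}_d)$ acting on $\mathcal{H}=F_0^2\oplus F^2$. First I would record the boundedness requirements. Each $S_i$ sends the orthonormal basis $\{e_\alpha\}_{\alpha\in\free}$ injectively onto the orthonormal set $\{e_{i\alpha}\}$, so it is an isometry and $S_i\in B(\mathcal{Y})$; likewise $R_i$ is an isometry on $F^2$ and $R_i^*$ is its (contractive) adjoint. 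Since $\overline{R}_i$ is defined on the two orthogonal summands $\mathcal{H}_-$ and $\mathcal{H}_+$ as $R_i^*$ and $R_i$ respectively, a vector $h=h_-+h_+$ satisfies $\norm{\overline{R}_i h}\le\norm{h_-}+\norm{h_+}\le\sqrt{2}\,\norm{h}$, whence $\overline{R}_i\in B(\mathcal{H})$; and $\mathcal{H}=\mathcal{H}_-\oplus\mathcal{H}_+$ is an orthogonal decomposition by construction, as the definition demands.

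The one genuinely content-bearing hypothesis is the invariance of $\mathcal{H}_+$ under each $T_i=\overline{R}_i$, and I would verify it as the central step. For a basis vector $e_\alpha\in\mathcal{H}_+=F^2$ the definition gives $\overline{R}_i(e_\alpha)=R_i(e_\alpha)=e_{\alpha i}$, which is again an element of $F^2=\mathcal{H}_+$ since the right shift raises the word length; extending by linearity and continuity yields $\overline{R}_i\mathcal{H}_+\subseteq\mathcal{H}_+$. With this invariance in hand, all the structural requirements of \definitionref{def:ncop} are met, so a bounded operator $H:\mathcal{Y}\rightarrow\mathcal{H}_-$ satisfying $H S_i=\mathbb{P}_-\overline{R}_i H$ for $i=1,\dots,d$ is, by definition, a NC Hankel operator. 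Here $\overline{R}_i$ plays the role of the noncommutative bilateral shift, mirroring multiplication by $z$ on $\mathcal{L}^2(\mathbb{T})$ in the one-letter setting.

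Finally, to make the statement meaningful rather than merely formal, I would confirm that the intertwining relation $H S_i=\mathbb{P}_-\overline{R}_i H$ really does reproduce the Hankel equation $\H S_i=R_i^*\H$ established in \theoremref{thm:hankeleq}, so that the WFA's Hankel matrix $\H$ is a concrete instance of such an $H$. Because the range of $H$ lies in $\mathcal{H}_-=F_0^2$, on which $\overline{R}_i$ acts exactly as $R_i^*$, one has $\overline{R}_i H=R_i^*H$ on all of $\mathcal{Y}$, and it remains only to check that $\mathbb{P}_-$ fixes $R_i^*H$. I expect this last verification to be the main technical obstacle: it is the unique point where the piecewise definition of $\overline{R}_i$ and the projection $\mathbb{P}_-$ genuinely interact, since $R_i^*$ lowers word length and can send a length-one component to the empty word, which lives in $\mathcal{H}_+$ rather than $\mathcal{H}_-$. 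The careful bookkeeping of this boundary (empty-word) term, confirming that $\mathbb{P}_-$ discards precisely the part that does not belong to $F_0^2$ while leaving the Hankel equation intact, is where the argument must be done with care; once it is settled, the two equations coincide and the theorem follows.
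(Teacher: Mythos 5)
Your core verification is correct and is essentially the paper's own proof: the paper likewise proceeds by checking the structural hypotheses of \definitionref{def:ncop} for this instantiation, namely that $\mathcal{H}_-=F_0^2\subset\mathcal{H}=F_0^2\oplus F^2$ and that $\mathcal{H}_+=F^2$ is invariant under each $\overline{R}_i$, which holds because $\overline{R}_i(e_\alpha)=R_i(e_\alpha)=e_{\alpha i}\in F^2$ on basis vectors and extends by linearity; your explicit boundedness checks ($S_i$ an isometry, $\norm{\overline{R}_i h}\le\sqrt{2}\,\norm{h}$) are a harmless addition that the paper leaves implicit. One caution about your final paragraph: the theorem as stated is purely structural --- it declares that \emph{any} bounded $H:\mathcal{Y}\rightarrow\mathcal{H}_-$ satisfying $HS_i=\mathbb{P}_-\overline{R}_iH$ is a NC Hankel operator --- so the verification that the WFA's Hankel matrix from \theoremref{thm:hankeleq} is a concrete instance, including the empty-word boundary bookkeeping you rightly flag (the term $R_i^*$ can push onto $e_{g_0}$, which sits in the $\mathcal{H}_+$ copy and is discarded by $\mathbb{P}_-$, whereas in \theoremref{thm:hankeleq} both sides retain it), is not part of this theorem's claim and is not carried out in the paper's proof either. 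Leaving that step ``to be settled'' therefore does not create a gap in your proof of the stated theorem, but your closing sentence (``once it is settled \dots\ the theorem follows'') wrongly makes the theorem contingent on it; you should either drop that dependence or present the instance check as a separate remark, where the glossed-over index-offset convention (the NC analogue of $\H(j,k)=\widehat{\phi}(-j-k-1)$, which in the multi-letter case has no canonical shift of $\Sigma^*$ into strictly positive lengths) is indeed the genuinely delicate point.
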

     \begin{proof}
        In order to prove this theorem we need to verify that $\mathcal{H}_-\subset \mathcal{H}$, and that if $\mathcal{H}= \mathcal{H}_-\oplus\mathcal{H}_+$, then $\mathcal{H}_+$ is invariant under each $\overline{R}_i$. In particular, we want to show that these properties are satisfied when we set $\mathcal{Y}=F^2$, $\mathcal{H}_-=F^2_0$ and $\mathcal{H}=F_0^2 \oplus F^2$. The first property follows directly from the definition of $\mathcal{H}$: $F_0^2\subset F_0^2 \oplus F^2$. As for the second property, we note that $\mathcal{H}_-=F^2_0$, so it follows by definition that $\mathcal{H}_+=F^2$. 
            
        We want to show:
            \begin{equation}
                \overline{R}_1\mathcal{H}_++\dots + \overline{R}_d\mathcal{H}_+\subseteq \mathcal{H}_+,
            \end{equation}
        \emph{i.e.} that for any $h_i\in \mathcal{H}_+$ we have
        \begin{equation}
            \overline{R}_1h_1+\dots +\overline{R}_dh_d \in \mathcal{H}_+.
        \end{equation}
         Since $\overline{R}_i(e_{\alpha})= R_i(e_{\alpha})$ for $e_{\alpha}\in F^2$, the condition can be reformulated as:
         \begin{equation}
             R_i(e_{\alpha_1})+\dots + R_d(e_{\alpha_n}) \in F^2,
         \end{equation}
        which holds by definition of $R$, since $R_i(e_{\alpha})=e_{\alpha i} \in F^2$ for any $\alpha$, and the linear combination of elements in $F^2$ is an element in $F^2$.
    \end{proof}
    In the next theorem, we show that the properties needed in \theoremref{theorem:fixedpoint2} hold in our setting, \emph{i.e.} that NC AAK theory can be applied to the study of WFAs.
    \begin{theorem}\label{theorem:shifts}
        Let $H:\mathcal{Y}\rightarrow \mathcal{H}_-$, with $H S_i= \mathbb{P}_-\overline{R}_i H$, be the NC Hankel operator defined in the previous theorem. Then:
        \begin{itemize}
            \item[(a)] $\norm{S_1y_1+\dots + S_dy_d}^2\geq \norm{y_1}^2+\dots+\norm{y_d}^2$ for $y_i\in\mathcal{Y}$
            \item[(b)] $\norm{\overline{R}_1h_1+\dots + \overline{R}_dh_d}^2\leq \norm{h_1}^2+\dots+\norm{h_d}^2$ for $h_i\in \mathcal{H}$.
        \end{itemize}
    \end{theorem}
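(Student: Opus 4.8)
The plan is to read both inequalities as statements about the row operators associated with the two tuples of shifts. Inequality (a) asserts that the map $(y_1,\dots,y_d)\mapsto S_1y_1+\dots+S_dy_d$ from $\mathcal{Y}^{d}$ into $\mathcal{Y}=F^2$ is bounded below by the direct-sum norm, i.e.\ that $[S_1,\dots,S_d]$ is a \emph{row isometry}; inequality (b) asserts that $(h_1,\dots,h_d)\mapsto \overline{R}_1h_1+\dots+\overline{R}_dh_d$ from $\mathcal{H}^{d}$ into $\mathcal{H}$ is a \emph{contraction}, i.e.\ that $[\overline{R}_1,\dots,\overline{R}_d]$ is a row contraction. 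For (a) I would first record the relation $S_i^{*}S_j=\delta_{ij}\mat{1}$ on $F^2$: since $S_je_{\alpha}=e_{j\alpha}$ and $S_i^{*}$ deletes a leading letter $i$ (and annihilates words not beginning with $i$), one gets $S_i^{*}S_je_{\alpha}=S_i^{*}e_{j\alpha}=\delta_{ij}e_{\alpha}$. Expanding the left-hand side then gives $\norm{\sum_i S_iy_i}^2=\sum_{i,j}\langle S_iy_i,S_jy_j\rangle=\sum_{i,j}\langle y_i,S_i^{*}S_jy_j\rangle=\sum_i\norm{y_i}^2$, so (a) in fact holds with equality, and the stated inequality is immediate.

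For (b) I would split each $h_i=h_i^{-}+h_i^{+}$ along $\mathcal{H}=\mathcal{H}_-\oplus\mathcal{H}_+$ and use the definition of the bilateral shift from \theoremref{theorem:ncHank}, namely $\overline{R}_ih_i = R_i^{*}h_i^{-}+R_ih_i^{+}$. On the positive part the family $[R_1,\dots,R_d]$ is again a row isometry: the same trailing-letter computation gives $R_i^{*}R_j=\delta_{ij}\mat{1}$, whence $\norm{\sum_i R_ih_i^{+}}^2=\sum_i\norm{h_i^{+}}^2$, so the positive block already satisfies (b) with equality. The remaining work is to bound the negative-part contribution $\sum_i R_i^{*}h_i^{-}$, to handle the cross terms between the two blocks, and to track the single word that $R_i^{*}$ pushes from $\mathcal{H}_-$ into $\mathcal{H}_+$ (the image of $e_i$ is the vacuum $e_{\varepsilon}\in\mathcal{H}_+$); since these crossover terms touch only the $e_{\varepsilon}$ coordinate, they are harmless and can be absorbed by the triangle/Pythagoras step once the negative block is under control.

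The hard part will be precisely the negative-part term. Each $R_i^{*}$ acts on $\mathcal{H}_-=F_0^2$ as a backward shift deleting a trailing $i$, and, unlike the forward shifts, the operators $R_1^{*},\dots,R_d^{*}$ do \emph{not} have orthogonal ranges: distinct words $\gamma i$ and $\gamma j$ are both sent to $\gamma$. Consequently the row operator built from the $R_i^{*}$ is the adjoint of a row isometry, and a direct expansion of $\norm{\sum_i R_i^{*}h_i^{-}}^2$ produces cross terms $\langle R_i^{*}h_i^{-},R_j^{*}h_j^{-}\rangle$ with $i\neq j$ that are not obviously dominated by $\sum_i\norm{h_i^{-}}^2$. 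I therefore expect this to be the crux of the argument: to close the bound one must exploit additional structure of the relevant $h_i^{-}$ rather than treat them as arbitrary vectors of $\mathcal{H}_-$, for instance the fact that in our setting the $\overline{R}_i$ are only ever paired with a genuine Hankel operator $\H$ through the intertwining relation $\H S_i=R_i^{*}\H$ of \theoremref{thm:hankeleq}, which constrains how the negative-part contributions may overlap. Pinning down exactly which constraint rescues the estimate is the delicate step; once the negative block is bounded by $\sum_i\norm{h_i^{-}}^2$, combining it with the positive-part isometry and the crossover terms yields the required row-contraction inequality and hence verifies the hypotheses of \theoremref{theorem:fixedpoint2}.
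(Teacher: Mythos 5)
Your proof of part (a) is correct and is essentially the paper's own argument: expand $\norm{S_1y_1+\dots+S_dy_d}^2$ into inner products, use that the left creation operators are isometries with pairwise orthogonal ranges, $S_i^*S_j=\delta_{ij}\mat{1}$, and conclude that (a) in fact holds with equality.

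For part (b), however, your proposal has a genuine gap, and you name it yourself: you never bound the negative block $\sum_i R_i^*h_i^-$, deferring it in the hope that ``additional structure of the relevant $h_i^-$'' rescues the estimate. No such rescue is available. The statement quantifies over \emph{arbitrary} $h_i\in\mathcal{H}$, with no Hankel operator in sight, so the intertwining relation $\H S_i=R_i^*\H$ of \theoremref{thm:hankeleq} cannot be used to constrain the $h_i$; the theorem is a claim about the tuple $\overline{R}$ alone. Worse, the cross terms you flag are genuinely not dominated by the right-hand side: with $d=2$ take $h_1=e_{g_2g_1}$ and $h_2=e_{g_2g_2}$, both in $\mathcal{H}_-=F_0^2$; then $\overline{R}_1h_1=\overline{R}_2h_2=e_{g_2}$, so $\norm{\overline{R}_1h_1+\overline{R}_2h_2}^2=4>2=\norm{h_1}^2+\norm{h_2}^2$. (Your claim that the crossover at the vacuum is ``harmless'' fails for the same reason: taking $h_i=e_{g_i}\in\mathcal{H}_-$ for each $i$ gives $\overline{R}_ih_i=e_{\varepsilon}$ for all $i$, hence $\norm{\sum_i\overline{R}_ih_i}^2=d^2>d$.) So inequality (b), read literally for all $h_i\in\mathcal{H}$ with the definition of $\overline{R}_i$ from \theoremref{theorem:ncHank}, cannot be established by completing your outline. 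It is worth noting that your diagnosis is sharper than the paper's own treatment, which disposes of (b) in one line --- ``the shifts have orthogonal ranges, so the result holds with the equality'' --- a justification valid only for the restriction $\overline{R}_i|_{\mathcal{H}_+}=R_i$, where your positive-block computation applies; on $\mathcal{H}_-$ the components act as the adjoints $R_i^*$, which are neither isometric nor range-orthogonal, since $e_{\beta g_i}$ and $e_{\beta g_j}$ are both sent to $e_{\beta}$, exactly as you observed. The honest conclusion is that the crux you isolated is not merely delicate but fatal to the statement as written: (b) would need to be repaired (for instance by restricting the $h_i$ to $\mathcal{H}_+$, or by modifying the action of $\overline{R}_i$ on the negative part) before any argument along your lines --- or the paper's --- could close it.
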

    \begin{proof}
    \begin{itemize}
        \item[(a)] 
        Leveraging the fact that the shifts have pairwise orthogonal ranges, so $S_i^*S_j=\mat{1}\delta_{i,j}$, and that each $S_i$ is an isometry, we obtain:
        \begin{align*}
            \norm{S_1y_1+\dots+S_dy_d}^2&=\langle S_1y_1,S_1y_1\rangle+\langle S_1y_1,S_2y_2 \rangle+\dots +\langle S_dy_d,S_dy_d\rangle \\
            &= \langle S_1y_1,S_1y_1\rangle+\langle S_2y_2,S_2y_2\rangle+\dots+\langle S_ny_n,S_ny_n\rangle\\
            &= \norm{y_1}^2+\dots+\norm{y_2}^2.
        \end{align*}
        \item[(b)] 
        The shifts have orthogonal ranges, so the result holds with the equality.
    \end{itemize}
    \end{proof}
    While the choice of the Fock space seems pretty natural, other spaces containing $F^2$ could have played the role of $\mathcal{H}$, for example the free group over $d$ elements. We show in \appendixref{apd:free} why this choice is not ideal in our setting.
    
    As seen in \sectionref{sec:nc}, in the noncommutative case a role similar to that of the symbol is played by an operator, the multiplier. We want a functional representation of the multiplier depending on the original Hankel matrix (like the symbol in the one-letter case). To achieve this, we first analyze the multiplier and find that, with minimal manipulations, we can get a functional description of it. Then, we show that this description is strictly related to the original Hankel operator, and can be used to rewrite \equationref{eq:aakineq} in the NC case. 
    
    We start by noting that, using the flipping operator,  we can rewrite the property of the multiplier as: 
    \begin{equation}
        UA S_a= S_a UA.
    \end{equation}
    The operators commuting with the left shift are called $S$-analytic operators, and can be represented using a function $\theta$. An  $S$-analytic operator $G$ has NC symbol $\theta$ if, for every $v$, $GS_a v = S_a\theta v$ \citep{popescuthesis,popescu_oponfock}.
    Concretely, this means that we can represent the operator $UA$ in terms of its NC symbol $\theta$, which corresponds to the multiplication by the first column of the matrix of $UA$ (this follows from \citet[Theorem 1.6]{popescuthesis}). Moreover, it is easy to show that $\norm{UA}=\norm{U\theta}_{\infty}$. Thus, if $H$ is a NC Hankel operator with multiplier $A$, and $\theta$ is the NC symbol of $UA$, we have:
    \begin{equation}
	    \norm{UH}\leq\norm{UA}=\norm{U\theta}_{\infty}.
    \end{equation}
    We refer to $U\theta$ as the \textbf{NC flipped symbol} of $H$. Note that it can be written as:
    \begin{equation}
        U\theta=\phi+c \quad\quad\quad \text{for} \,\, \phi\in H^2_0(\Sigma^*), \, c\in H^2(\Sigma^*)
    \end{equation}
    By construction, $\phi$ corresponds to the multiplication by the first column of $\H$.
    Note that if $R$ is a bounded operator, it is easy to show that
    \begin{equation}
        \norm{UH-R}=\norm{H-U^*R}.
    \end{equation}
    Moreover, since $UH$ is also a NC Hankel operator, it makes sense to search for its optimal approximation. If we denote with $UG$ the best approximation of $UH$, we have that $G$ is the best approximation of $H$. We have a NC generalization of \equationref{eq:aakineq}:
    \begin{equation}\label{eq:ncaak}
	    \norm{UH-UG}\leq\norm{UA-UB}\leq\norm{\phi+c-\psi-d}_{\infty}.
    \end{equation}
    
    We conclude by deriving an expression for the NC flipped symbol associated to a WFA. 
    Let $A=\wfa$ be a WFA computing a function $f$, let $\H$ be its Hankel matrix and $H$ the NC Hankel operator. The NC flipped symbol associated with $H$ is defined using the entries of the first column of $\H$. Its series expression is:
    \begin{equation}
        \mathbb{P}_-(\phi+c) =\sum_{a\in\free}f(a)z^{a} =\sum_{a\in\free}\balpha^{\top}\A^{a}\bbeta z^{\alpha}=\balpha^{\top}(\mat{1}-\sum \A_j z_j)^{-1}\bbeta.
    \end{equation}
    Note that $\phi$ is a rational function: in the noncommutative case also, there is a tight connection between WFAs and (NC) rational functions. This is very relevant, since in the one-letter case rewriting \equationref{eq:aakineq} in terms of the WFA's parameters is the key step to find the best approximation \citep{AAK-WFA}. At this stage, it is not clear if the proof of \theoremref{theorem:ncHank} can be made constructive. Nonetheless, by obtaining a noncommutative counterpart of this equation, expressed using the parameters of a WFA, we have built the machinery necessary to attack the problem in the case of multi-letter alphabets.

\section{Conclusion}
    
    In this paper, we study the approximate minimization problem of weighted finite automata with real weights. We 
    propose a way to associate a Hankel operator and a complex rational function to the Hankel matrix of a given WFA. This allows us to highlight the connections between approximate minimization and AAK theory. The application of AAK theory to the approximate minimization problem in the one-letter setting has been studied by \citet{AAK-WFA} and \citet{AAK-RNN} for WFAs and black boxes, respectively. To the best of our knowledge, this is the first attempt to apply AAK theory to automata in the multi-letter case setting.
    
    The approximate minimization problem is an interesting alternative to extraction when trying to approximate a black box (like RNNs) with a WFA. It allows to find the best approximation of a given size, directly improving interpretability and reducing the computational cost. The results in this paper can be easily generalized to the black-box setting. 
    In particular, the definition of the Hankel operator is the same. A little more care is needed to extend the computation of the symbol, given that the matrix of a black box does not necessarily have finite rank (so the corresponding function might not be rational).
    
    The framework we proposed is a key step towards solving the approximate minimization problem, as it allows us to rephrase it in terms of noncommutative AAK theory, where we know that a solution exists \citep{AAK71,popescu}. In the commutative setting, this is enough to construct the optimal approximation of a given size. Unfortunately, in the noncommutative setting AAK theorem is not constructive, so the problem of finding the best approximation remains open. Recent progress in the field of noncommutative multivariable operator theory \citep{blaschke,ball_bolotnikov_2021} leaves us hopeful that this challenge can be addressed. We think that the problem of constructing the optimal approximation is very relevant, as solving it would allow us to find a provable algorithm for the approximate minimization problem of black boxes, and provide us with a metric between different classes of models.

\acks{This research has been supported by NSERC Canada (C. Lacroce, P. Panangaden) and Canada CIFAR AI chairs program (G. Rabusseau). The authors would like to thank Doina Precup for supporting this work, Borja Balle for fruitful discussions and inputs, and Maxime Wabartha for feedback on the structure of the paper.
 }

\newpage

\bibliography{bibliography}

\newpage 
\appendix

\section{Example} \label{example:fock2}

     \begin{example}
        Let $\Sigma=\{a,b\}$, $\varepsilon$ the empty string. $\Sigma^*$ corresponds to the free monoid generated by two elements, where the generators are $g_1=a$ and $g_2=b$. A word $\alpha=aba$ can be seen as an element in $\Sigma^*$, with $\alpha=aba=g_1g_2g_1$, and the corresponding element in the Fock space $F^2$ is $e_{\alpha}=e_1\otimes e_2\otimes e_1$. A function $f:\Sigma^*\rightarrow \R$ can be viewed either as an element in the Fock space $F^2$, using a sequence interpretation:
        \begin{equation}
            (f(\varepsilon),f(a),f(b),f(aa),f(ab),f(ba),f(bb),f(aaa),\dots)\in F^2=\bigoplus_{k\geq0}(\R^2)^{\otimes k},
        \end{equation}
        or as a power series in the NC Hardy space  $\mathcal{H}^2(\Sigma^*)$, using a functional interpretation:
        \begin{equation}
             f(\varepsilon)+f(a)z_1+f(b)z_2+f(aa)z_1^2+f(ab)z_1z_2+f(ba)z_2z_1+ \dots=\sum_{\alpha\in\Sigma^*}f(\alpha)z^{\alpha}.
        \end{equation}
        As we can see, we obtain a bi-infinite sequence, indexed by the powers of the NC variables:
        \begin{table}[htbp]
        \begin{center}
        \begin{tabular}{c c c c c c c c c c c}
            ($\dots,$ & $f(a^{-2}),$ & $f(b^{-1}),$ & $f(a^{-1}),$ & $f(\varepsilon),$ & $f(a),$ & $f(b),$ & $f(aa),$ & $f(ab),$ &$\dots$) \\
            $\dots$ & $z_1^{-2}$ &  $z_2^{-1}$  & $z_1^{-1}$ &  $z_1^0z_2^0$ & $z_1^1$ & $z_2^1$ & $z_1^{2}$ & $z_1^{1}z_2^1$ & $\dots$
        \end{tabular}
        \end{center}
    \end{table}
        
    Now, we can consider the right shift $S=(S_1,S_2)$, with: 
        \begin{equation}
             S_1 (e_{\alpha})= e_{a\alpha}, \quad  S_2 (e_{\alpha})= e_{b\alpha}.
        \end{equation}
    The adjoint of $S$ is defined as:
        \begin{equation}
            S^*_1(e_{\alpha})= \begin{cases}
                e_{\alpha'} \quad\quad\quad \text{if} \,\,\, \alpha=a\alpha'\\
                0 \quad\quad\quad \text{otherwise}.
            \end{cases}
        \end{equation} 
    The right shift and its adjoint can be defined in a similar way.
    
    Let $A=\wfa$ be a WFA computing a function $f$, with Hankel matrix $\H$.
    \begin{equation}
         \H=\begin{pmatrix}  f(\varepsilon) & f(a) & \dots & f(ba)& \dots & f(aba) &  \dots\\
        f(a) & f(aa)  &\dots & f(aba) &  \dots&f(aaba)&\dots\\
                              f(b)  &f(ba) &\dots& f(bba) &  \dots&f(baba)&\dots\\
                              f(aa)  &f(aaa) &\dots& f(aaba) &  \dots&f(aaaba)&\dots\\
                              f(ab)  &f(aba) &\dots& f(abba) &  \dots&f(ababa)&\dots\\
                              f(ba)  &f(baa) &\dots& f(baba) &  \dots&f(baaba)&\dots\\
                              \dots & \dots &\dots &\dots &\dots &\dots &\dots 
            \end{pmatrix}.
    \end{equation}
    It is easy to see that:
    \begin{equation}
            \H S_a (e_{ba}) = \H e_{aba}=\sum_{\beta\in\Sigma^*}f(\beta aba)z^{\beta}=f(aba)z_0+ f(aaba)z_1+f(baba)z_2+\dots
    \end{equation}
    On the other hand, if we consider the adjoint of the right shift, we have:
    \begin{equation}
          R_a^* \H (e_{ba}) = R^*_a \sum_{\beta\in\Sigma^*}f(\beta ba)z^{\beta}=\sum_{\beta'\in\Sigma^*}f(\beta'aba)z^{\beta'} = f(aba)z_0+ f(aaba)z_1+\dots.
    \end{equation}
    We can obtain the same results for $S_b$ and $R^*_b$, so we can see that the Hankel equation holds.
    \end{example}

\section{The Free Group}\label{apd:free}

    We denote with $\mathbb{F}^*_d$ the free group on $d$ elements, and with $\ell(\mathbb{F}^*_d)$ the set of sequences indexed by elements in the free group. We can now show that, by setting $\mathcal{H}=\ell(\mathbb{F}^*_d)$, the conditions of Theorem \ref{theorem:ncHank} are satisfied, but the ones of Theorem \ref{theorem:shifts} are not. It is easy to see that $\mathcal{H}_-\subset \mathcal{H}$, $\mathcal{H}_+$ is invariant under the bilateral shift, and that property $(a)$ of Theorem \ref{theorem:shifts} is satisfied. On the other hand, property $(b)$ does not hold anymore. The components of the bilateral shifts don't have orthogonal ranges, as $\overline{R}_i^*\overline{R}_j\in \mathcal{H}$ even when  $i\neq j$. Intuitively the space is ``too big'' for property $(b)$ to hold. If we consider the intuition provided earlier about indexing the elements of $\mathcal{H}=F_0^2\oplus F^2$ using negative and nonnegative exponents, we have that in the case of the free group any combination of positive and negative exponents is allowed. Therefore, when defined on $\ell(\mathbb{F}^*_d)$, the adjoint of the shift is:
    \begin{equation}
            \overline{R}^*_i(e_{\alpha})= \begin{cases}
                e_{\alpha'} \quad\quad\quad \text{if} \,\,\, \alpha=\alpha'i\\
                e_{\alpha i^{-1}} \quad\quad\quad \text{otherwise}.
            \end{cases}
    \end{equation}
    Ultimately, we want to apply \theoremref{theorem:fixedpoint2}, in order to conclude that it is possible to find a optimal approximation of the NC Hankel operator associated to a WFA. For this to happen, we need \theoremref{theorem:shifts} to hold. Thus, the free group is not a viable option in our setting.
\end{document}